\DeclarePairedDelimiter\ceil{\lceil}{\rceil}
\DeclarePairedDelimiter\floor{\lfloor}{\rfloor}
\newtheorem{assumption}{Assumption}
\newtheorem{theorem}{Theorem}
\newtheorem{remark}{Remark}
\newtheorem{corollary}{Corollary}
\title{\LARGE \bf Real-time Distributed MPC for Multiple Underwater Vehicles with Limited Communication Data-rates}
\author{Yujia Yang, Ye Wang, Chris Manzie, and Ye Pu% <-this % stops a space
\thanks{$^{1}$Y. Yang, Y. Wang, C. Manzie, and Y. Pu are with the Department of Electrical and Electronic Engineering, University of Melbourne, Parkville VIC 3010, Australia {\tt\small {yujyang1}@student.unimelb.edu.au, {ye.wang1,manziec,ye.pu}@unimelb.edu.au}}
\thanks{Y. Yang is supported by the Melbourne Research Scholarship provided by the University of Melbourne.}
}
\begin{document}

\maketitle
\thispagestyle{empty}
\pagestyle{empty}

%%%%%%%%%%%%%%%%%%%%%%%%%%%%%%%%%%%%%%%%%%%%%%%%%%%%%%%%%%%%%%%%%%%%%%%%%%%%%%%%
\begin{abstract}
% This paper proposes an optimal quantization design for distributed model predictive control (DMPC) of multi-agent systems with application to multiple autonomous underwater vehicles (AUVs) in presence of communication constraints. Neighbour-to-neighbour-only communication and a limited data-rate lead to a trade-off between the number of iterations and the amount of data exchanged. We propose an off-line method for the optimal quantization design. Furthermore, we propose a real-time optimal control framework that combines the optimal quantization design and a DMPC formulation with soft constraints. We also analyze recursive feasibility and stability of the closed-loop system. Finally, we demonstrate the proposed algorithm with a multi-AUV system.

Controlling a fleet of autonomous underwater vehicles can be challenging due to low bandwidth communication between agents. This paper proposes to address this challenge by optimizing the quantization design of the communications between agents for use in distributed algorithms. The proposed approach considers two stages for the problem of multi-AUV control: an off-line stage where the quantization design is optimized; and an on-line stage based on a distributed model predictive control formulation and a distributed optimization algorithm with quantization. The standard properties of recursive feasibility and stability of the closed loop systems are analyzed, and simulations used to demonstrate the overall behaviour of the proposed approach.

\end{abstract}

%%%%%%%%%%%%%%%%%%%%%%%%%%%%%%%%%%%%%%%%%%%%%%%%%%%%%%%%%%%%%%%%%%%%%%%%%%%%%%%%
\section{Introduction}
Cooperative control of multiple autonomous underwater vehicles (AUVs) has been drawing increasing attention in ocean exploration, where sub-meter resolution scan of vast range of sea floor is desired \cite{8604772}, \cite{Rego2014}. The successful implementation of a cooperative control strategy depends on reliable communications. However, marine environment imposes strict constraints on the communication range and data-rate \cite{Fallon2010}. In practice, acoustic modems only have communication range between 0.1 km to 5 km and data-rate between 0.1 kbps to 15 kbps \cite{Sendra2016}. Consequently, these communication constraints lead to inaccurate information exchange that may cause deteriorated control performance \cite{Wen2019}. 

Model Predictive Control (MPC) is a popular tool for motion control of autonomous systems. It can optimize multiple control specifications taking into account, for example, state and input constraints, collision avoidance and energy consumption \cite{Yang2019}. For controlling a multi-AUV system with a surfacing unit, a centralized MPC can be used by solving the resulted optimization problem in the surfacing unit who gathers global information \cite{Fallon2009}. However, for a multi-AUV system operating in deep water, long range information exchange is less reliable, sometimes impossible. As the system gets bigger and more complex, solving an optimal control problem in a centralized way becomes difficult, since it requires full communication to collect information from each subsystem, and enough computational power on one central entity to solve the global optimization problem. A promising concept to avoid these problems is to use distributed MPC (DMPC) technique, requiring only neighbour-to-neighbour communication, to solve network-level control problems. The recent results have shown the great potential of DMPC for motion control of multi-agent systems, such as circular path-following \cite{Hu2019}, circular formation control \cite{Wen2019}, and plug-and-play maneuver in platooning \cite{Hu2009}. 

Successful implementations of DMPC require to solve distributed optimization problems in a real-time manner. However, due to the communication constraints and limitations in the AUV application, distributed optimization algorithms may suffer from noised iterations. Inexact distributed optimization algorithms can potentially deal with the errors resulted from noised iterations caused by unreliable or limited communication. Some works can be found in \cite{Magnusson2018}, \cite{Doan2020}, \cite{Q2016}, \cite{Devolder2014} and \cite{Nedelcu2014}. In \cite{Puy} and \cite{7402506}, the authors proposed an iteratively refining quantization design for distributed optimization and showed complexity upper-bounds on the number of iterations to achieve a given accuracy. 

In this paper, we aim to extend the quantized optimization algorithm proposed in \cite{7402506} to a real-time DMPC framework for multi-AUV systems with limited communication data-rates. The contributions of this paper can be summarized as 
\begin{itemize}
    \item Based on \cite{7402506}, we first present an novel approach to obtain an optimal quantization design to achieve the best sub-optimality subject to a limited communication data-rate. We establish a relationship between the quantization design and control performance.
    \item We propose a real-time DMPC framework based on a distributed optimization algorithm using the optimal quantization design. We further study the closed-loop properties of the system with the proposed approach.
    \item We apply the proposed approach to a case study of three AUVs with a real-time constraint on communication data-rates. The simulation results show the effectiveness of the DMPC with the optimal quantization design.
\end{itemize}
%
%%%%%%%%%%%%%%%%%%%%%%%%%%%%%%%%%%%%%%%%%%%%%%%%%%%%%%%%%%%%%%
%%%%%%%%%%%%%%%%%%%%%%%%%%%%%%%%%%%%%%%%%%%%%%%%%%%%%%%%%%%%%%
\section{Preliminaries}\label{section:prem}
\subsection{Notations}

Throughout this paper, we use the superscript $\star$ to indicate a variable as an optimal solution to an optimization problem. We use $k$ to denote the iteration step in an optimization algorithm, $l$ to denote a prediction step in an MPC problem, and $t$ to denote the time step of the closed-loop system. For a vector $x$, we use $\|x\|$ and $\|x\|_Q$ to denote the 2-norm and the weighted 2-norm, respectively. We use $ \operatorname{diag}(x)$ to indicate a diagonal matrix with the elements in diagonal induced by a vector $x$. We use $\operatorname{blkdiag}(X_1,\ldots,X_n)$ to indicate a block-diagonal matrix with elements in diagonal induced by matrices $X_1,\ldots,X_n$. We use $I$ to denote an identity matrix of appropriate dimension. Consider a distributed problem solved in a network of $M$ agents. The agents communicate according to a fixed undirected graph $G=(\mathcal{V}, \mathcal{E})$. The agents distribute according to the vertex set $\mathcal{V}=\{1, \cdots, M\}$ and exchange information according to the edge set $\mathcal{E} \subseteq \mathcal{V} \times \mathcal{V}$. If $(i, j) \in \mathcal{E}$, then agent $i$ is said to a neighbour of agent $j$ and $\mathcal{N}_{i}=\{j \mid(i, j) \in \mathcal{E}\}$ denotes the set of the neighbours of agent $i$. For a real number $z$, a uniform quantizer with quantization step-size $\Delta$ and mid-value $\bar{z}$ is defined by%has the following form: 
%$Q(z)=\bar{z}+\operatorname{sgn}(z-\bar{z}) \cdot \Delta \cdot\left\lfloor\frac{\|z-\bar{z}\|}{\Delta}+\frac{1}{2}\right\rfloor$ 
\begin{equation}
    Q(z)=\bar{z}+\operatorname{sgn}(z-\bar{z}) \cdot \Delta \cdot\left\lfloor\frac{\|z-\bar{z}\|}{\Delta}+\frac{1}{2}\right\rfloor,
\end{equation}
where $\operatorname{sgn}(z-\bar{z})$ denotes a sign function and $\Delta=\frac{l}{2^{n}}$. The parameters $l$ and $n$ represent the length of the quantization interval and the number of bits sent through a quantizer at each iteration, respectively. The quantization interval is set as $\left[\bar{z}-\frac{l}{2}, \bar{z}+\frac{l}{2}\right]$. Note that if $z$ falls inside the quantization interval, then the quantization error satisfies $|z-Q(z)| \leq \frac{\Delta}{2}=\frac{l}{2^{n+1}}$.
\subsection{Parametric Distributed Optimization Problem}

Consider the following parametric distributed optimization problem $\mathbb{P}^p(\zeta^t)$:
\begin{subequations}\label{problem:PDproblem}
    \begin{align}
         &\min _{z, z_{\mathcal{N}_{i}}}  f\left(z, \zeta^{t}\right)=\sum_{i=1}^{M} f_{i}\left(z_{\mathcal{N}_{i}}, \zeta_{i}^{t}\right), \\
        \text{s.t. } & z_{i} \in \mathbb{C}_{i},\; z_{i}=F_{j i} z_{\mathcal{N}_{j}}, j \in \mathcal{N}_{i}, \\
        & z_{\mathcal{N}_{i}}=E_{i} z,\; i=1,2, \cdots, M, 
    \end{align}
\end{subequations}
where $z_i$ denotes the local variable, $z_{\mathcal{N}_{i}}$ denotes the concentration of the local variable $z_j$ where $j\in \mathcal{N}_{i}$ and $z=\left[z_{1}, \cdots, z_{M}\right]^{\top}$ denotes the global variable. The matrix $E_i$ selects $z_{\mathcal{N}_{i}}$ from the global variable $z$. The matrix $F_{ji}$ selects the local variable $z_i$ from $z_{\mathcal{N}_j}$. Furthermore, the local constraints are $z_{i} \in \mathbb{C}_{i} \subseteq \mathbb{R}^{{m_{i}}}$, where $m_i$ is the size of the local variable vector and $\mathbb{C}_i$ is a convex set, for $i=1,\cdots, M$. $\zeta^t_i$ is a time-varying parameter that does not change the convexity of the problem. 
\begin{assumption} \label{ap1}
The local cost function $f_i(\cdot)$ has a Lipshitz continuous gradient with respect to a Lipshitz constant $L_i$. The global cost function $f(\cdot)$ is strongly convex with a convexity modulus $\sigma_{f}$.
\end{assumption}

Note that if Assumption \ref{ap1} holds, then the global cost function $f(\cdot)$ has a Lipschitz continuous gradient with the Lipschitz constant $L_{max}$, where $L_{max}:=\max_{1<=i<=M} L_i$. 

\subsection{Parametric Distributed Optimization Algorithm with Warm-starting and Progressive Quantization Refinement}

\begin{algorithm}[h]
\textbf{Require:} Give $ z^{0,0}$, $K$, $C_{\alpha}$ and $C_{\beta}$, $(1-\gamma)<\kappa<1$ where $\gamma = \frac{\sigma_f}{L}$ and $ \tau<\frac{1}{L}$. \\
\textbf{\textbf{for}} $t = 0,1,\cdots$ \textbf{\textup{do}} \\
 1. Initialize $C_{\alpha}^{t}=C_{\alpha}$, $C_{\beta}^{t}=C_{\beta}$, $ \hat{z}_{i}^{t,-1}=z^{t,0}_i$ and $\hat{\nabla}f_{i}^{t,-1}=\nabla f_{i}(\operatorname{Proj}_{\mathbb{C}_{\mathcal{N}_{i}}}(z_{\mathcal{N}_{i}}^{t,0}))$; \\
\textbf{\textup{for}} $k = 0,1,\cdots,K$ \textbf{\textup{do}} \\
\textbf{\textup{for}} $i = 1,\cdots,M$ \textbf{\textup{do in parallel}} \\
2. Update quantizer $Q_{\alpha, i}^{t,k}: l_{\alpha, i}^{t,k}=C^{t}_{\alpha} \kappa^{k}$ and $\bar{z}_{\alpha, i}^{t,k}=\hat{z}_{i}^{t,k-1}$;\\
3. Quantize local variable $\hat{z}_{i}^{t,k}=Q_{\alpha, i}^{t,k}\left(z_{i}^{t,k}\right)$; \\
4. Send $\hat{z}^{t,k}_i$ to agent $j$ for all $j\in\mathcal{N}_i$; \\
5. Compute the projection: $ \tilde{z}_{\mathcal{N}_{i}}^{t, k}=\operatorname{Proj}_{\mathbb{C}_{\mathcal{N}_{i}}}\left(\hat{z}_{\mathcal{N}_{i}}^{t,k}\right)$; \\
6. Compute $\nabla f^{t,k}_i = \nabla f \left(\tilde{z}_{\mathcal{N}_{i}}^{t, k}
\right)$; \\
7. Update quantizer $Q_{\beta, i}^{t,k}: l_{\beta, i}^{t,k}=C_{\beta}^t \kappa^{k}$ and $\bar{\nabla} f_{\beta, i}^{t,k}=\hat{\nabla} f_{i}^{t,k-1}$;\\
8. Quantize local gradient $\hat{\nabla} f_{i}^{t,k}=Q_{\beta, i}^{t,k}\left(\nabla f_{i}^{k}\right)$;\\
9. Send $\hat{\nabla} f^{t,k}_i$ to agent $j$ for all $j\in\mathcal{N}_i$;\\
10. $z^{t,k+1}_i = \mathrm{Proj}_{\mathbb{C}_{i}} \left(z^{t,k}_i - \tau \sum_{j \in \mathcal{N}_{i}} F_{j i} \hat{\nabla} f_{j}^{t,k} \right)$; \\
\textbf{\textup{end}} \textbf{\textup{end}}\\
11. Warm-start update: $z_i^{t+1,0} = z^{t,K+1}_i$ for $i=1,\cdots, M$;\\
12. \textbf{\textup{Return: }} $z^{t,K+1}_i$ for $i=1,\cdots, M$.\\
\textbf{\textup{end}}
\caption{Parametric Distributed Optimization with Warm-starting and Progressive Quantization Refinement}
\end{algorithm}

We now introduce Algorithm 1 is introduced to solve $\mathbb{P}^p(\zeta^t)$. In Algorithm 1, $Q_{\alpha, i}^{t,k}$ and $Q_{\beta, i}^{t,k}$ are two uniform quantizers. The subscripts $\alpha$ and $\beta$ indicate that they are used to quantize local variables and local gradients, respectively. The quantizers are refined at each iteration by shrinking the size of their quantization intervals according to $ l_{\alpha, i}^{t, k}= C^{t}_{\alpha} \kappa^{k}; l_{\beta, i}^{t,k}= C^t_{\beta} \kappa^{k}$ where $C^t_\alpha$ and $C^t_\beta$ are the initial quantization intervals and $\kappa$ is a shrinkage constant. The mid-values of the quantizers are updated according to $\bar{z}_{\alpha, i}^{t,k}=\hat{z}_{i}^{t,k-1}$ and $\bar{\nabla} f_{\beta, i}^{t,k}=\hat{\nabla} f_{i}^{t,k-1}$. The quantized values are designated by $\hat{\cdot}$ while output of the projection steps are designated by $\tilde{\cdot}$. The operation $\operatorname{Proj}_{\mathbb{C}}(v):=\operatorname{argmin}_{\mu \in \mathbb{C}}\|\mu-v\|$ represents the projection of any point $v \in \mathbb{R}^{n_{v}}$ on the set $\mathbb{C}$.

\subsection{Complexity Upper-bound for Algorithm 1}

There exist an error upper bound on the sub-optimality of the solutions given by Algorithm 1 at all time steps if the following assumptions are made.

\begin{assumption}\label{ap3}
For all $t \geq 0$, the solutions to $\mathbb{P}^p(\zeta^t)$ satisfy
\begin{equation} \label{solution upper bound}
    \left\|z^{\star}\left(\zeta^{t}\right)-z^{\star}\left(\zeta^{t+1}\right)\right\| \leq \rho.  
\end{equation}
\end{assumption}

\begin{assumption}\label{ap4}
The initial solution to $\mathbb{P}^p(\zeta^0)$ satisfies 
\begin{equation}
    \left\|z^{0}\left(\zeta^{0}\right)-z^{\star}\left(\zeta^{0}\right)\right\| \leq \epsilon.
\end{equation}
\end{assumption}

\begin{assumption}\label{ap5}
The initial quantization intervals $C_\alpha$ and $C_\beta$ satisfy  
\begin{subequations}
    \begin{align}
        a_{1} (\epsilon+\rho)+a_{2} \frac{C_{\alpha}}{2^{n+1}}+a_{3} \frac{C_{\beta}}{2^{n+1}} & \leq \frac{C_{\alpha}}{2},\\
        b_{1} (\epsilon+\rho)+b_{2} \frac{C_{\alpha}}{2^{n+1}}+b_{3} \frac{C_{\beta}}{2^{n+1}} & \leq \frac{C_{\beta}}{2}, 
    \end{align}
\end{subequations}
where the coefficients $a_1, a_2, a_3$ and  $b_1, b_2, b_3$ can be found in Appendix.
\end{assumption}

\begin{theorem}[\cite{7402506}] \label{tm1}
Consider Assumption 1-4 hold and the number of iteration $K$ at all time steps $t \geq 0$ satisfies
\begin{equation}
    K \geq \ceil*{\log _{\kappa} \frac{\epsilon(1-\kappa)}{\rho+\delta+(1-\kappa)(\epsilon+\delta)}}-1,
\end{equation}
where $\delta=\frac{\kappa\left(C_{1}+\sqrt{2 L} C_{2}\right)}{L(\kappa+\gamma-1)(1-\gamma)}, C_{1}=\frac{M \sqrt{m_{max}}\left(L_{\max } d C_{\alpha}+\sqrt{d} C_{\beta}\right)}{2^{n+1}}$, and $C_{2}=\frac{\sqrt{2}}{2} \cdot \frac{M \sqrt{m_{max}} C_{\alpha}}{2^{n+1}}$. 
The parameter $m_{max}:=\max_{1<=i<=M} m_i$ is the largest size of local variables, $d$ is the degree of the graph and $L_{max}$ is the largest Lipshitz constant of the gradient of the local cost $f_i$, i.e., $L_{max}:=\max_{1<=i<=M} L_i$, and $L$ is the Lipshitz constant of the gradient of the global cost $f$. Then, at all time steps $t \geq 0$, the solution given by Algorithm 1 satisfies
\begin{equation}\label{error upper bound}
    \left\|z^{K+1}\left(\zeta^{t}\right)-z^{\star}\left(\zeta^{t}\right)\right\| \leq \epsilon.
\end{equation}
\end{theorem}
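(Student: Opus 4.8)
\noindent\emph{Proof sketch.} The plan is to induct over the time step $t$, using as the inner engine a contraction analysis of the quantized projected-gradient recursion (Steps~2--10 of Algorithm~1) run within a single time step. Fix $t$ and assume the warm-started point obeys $\|z^{t,0}-z^{\star}(\zeta^{t})\|\le\epsilon+\rho$. First I would show, by induction on the inner index $k$, that the true local variable $z_i^{t,k}$ and the true local gradient $\nabla f_i^{t,k}$ never leave the current quantization intervals --- those centred at $\hat z_i^{t,k-1}$, $\hat\nabla f_i^{t,k-1}$ with lengths $C_\alpha\kappa^{k}$, $C_\beta\kappa^{k}$. Assumption~\ref{ap5} is exactly what makes the base of this induction go through, since the distance travelled in one inner step is controlled jointly by $\|z^{t,0}-z^{\star}(\zeta^{t})\|\le\epsilon+\rho$ and by the shrinking interval lengths, the coefficients $a_i,b_i$ encoding this propagation. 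Once the iterates stay in range, the quantization errors obey $\|\hat z_i^{t,k}-z_i^{t,k}\|\le C_\alpha\kappa^{k}/2^{n+1}$ and $\|\hat\nabla f_i^{t,k}-\nabla f_i^{t,k}\|\le C_\beta\kappa^{k}/2^{n+1}$.

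\emph{Contraction and summation.} Writing $e_k:=\|z^{t,k}-z^{\star}(\zeta^{t})\|$, Step~10 is one step of projected gradient descent on the strongly convex, $L$-smooth global cost $f(\cdot,\zeta^{t})$ (Assumption~\ref{ap1}) perturbed by an additive error built from the gradient quantization $Q_\beta$, from the variable quantization $Q_\alpha$ routed through the Lipschitz gradient and the nonexpansive map $\operatorname{Proj}_{\mathbb C}$, and from the selection/aggregation matrices $F_{ji},E_i$ summed over the graph. Using strong convexity and co-coercivity with the step size $\tau<1/L$ for the exact part, one obtains a recursion $e_{k+1}\le(1-\gamma)e_{k}+\omega\kappa^{k}$, where $\omega$ collects the $k=0$ error magnitudes and is proportional to $C_{1}+\sqrt{2L}\,C_{2}$ (the factors $M$, $\sqrt{m_{max}}$, $d$ inside $C_{1},C_{2}$ arise from summing the per-agent errors over the network). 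Unrolling over $k=0,\dots,K$, using $1-\gamma<\kappa$ to bound $\sum_{j=0}^{K}(1-\gamma)^{K-j}\kappa^{j}\le\kappa^{K+1}/(\kappa+\gamma-1)$ and $(1-\gamma)^{K+1}\le\kappa^{K+1}$, gives
\begin{equation*}
  \bigl\|z^{t,K+1}-z^{\star}(\zeta^{t})\bigr\|\ \le\ \kappa^{K+1}\bigl(\,\bigl\|z^{t,0}-z^{\star}(\zeta^{t})\bigr\|+\delta\,\bigr),
\end{equation*}
with $\delta=\omega/(\kappa+\gamma-1)$, i.e.\ exactly the constant in the statement.

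\emph{Outer induction over $t$.} Put $\eta^{t}:=\|z^{t,K+1}-z^{\star}(\zeta^{t})\|$. The warm-start rule $z^{t+1,0}=z^{t,K+1}$ and Assumption~\ref{ap3} give $\|z^{t+1,0}-z^{\star}(\zeta^{t+1})\|\le\eta^{t}+\rho$, so the inner bound yields $\eta^{t+1}\le\kappa^{K+1}(\eta^{t}+\rho+\delta)$, while Assumption~\ref{ap4} gives $\eta^{0}\le\kappa^{K+1}(\epsilon+\delta)$; the hypothesis $\|z^{t,0}-z^{\star}(\zeta^{t})\|\le\epsilon+\rho$ needed above is maintained along the same induction, its base case being Assumption~\ref{ap4}. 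Unrolling the linear recursion for $\eta^{t}$ and bounding $1/(1-\kappa^{K+1})\le1/(1-\kappa)$ gives $\eta^{t}\le\kappa^{K+1}(\epsilon+\delta)+\kappa^{K+1}(\rho+\delta)/(1-\kappa)$ for all $t\ge0$. Requiring the right-hand side to be at most $\epsilon$, equivalently $\kappa^{K+1}\bigl[\rho+\delta+(1-\kappa)(\epsilon+\delta)\bigr]\le\epsilon(1-\kappa)$, and solving for $K$ (noting $\log_\kappa$ is decreasing since $\kappa<1$) is precisely the stated lower bound on $K$, which closes the argument.

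\emph{Main obstacle.} The delicate step is the coupled inner induction: proving the iterates never leave the shrinking quantization intervals needs the contraction estimate $e_{k}\le\kappa^{k}(e_{0}+\delta)$, which in turn presumes the quantization errors are already under control, so the two facts must be established simultaneously --- and this is exactly where Assumption~\ref{ap5} (the inequalities sizing $C_\alpha,C_\beta$ against $\epsilon+\rho$) is consumed. A secondary, purely mechanical difficulty is the bookkeeping that turns the per-agent quantization errors, propagated through $F_{ji}$, $E_i$, $\operatorname{Proj}_{\mathbb C}$ and the Lipschitz gradient, into the explicit aggregate constants $C_{1}$, $C_{2}$ and the coefficients $a_i,b_i$.
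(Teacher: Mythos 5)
There is nothing in the paper to compare against: Theorem~\ref{tm1} is imported verbatim from \cite{7402506} and stated without proof, so your attempt can only be judged on its own terms and against the argument in that reference. On those terms your sketch is sound and follows the same route as the cited work: an inner analysis showing the quantized projected-gradient step satisfies an inexact contraction $e_{k+1}\le(1-\gamma)e_k+\omega\kappa^k$, a geometric-sum bound using $1-\gamma<\kappa$ to get $e_{K+1}\le\kappa^{K+1}(e_0+\delta)$, and an outer induction over $t$ combining the warm start with Assumption~\ref{ap3}. Your final algebra is correct: unrolling $\eta^{t+1}\le\kappa^{K+1}(\eta^t+\rho+\delta)$ with $\eta^0\le\kappa^{K+1}(\epsilon+\delta)$ and $1/(1-\kappa^{K+1})\le1/(1-\kappa)$ yields exactly the condition $\kappa^{K+1}\bigl[\rho+\delta+(1-\kappa)(\epsilon+\delta)\bigr]\le\epsilon(1-\kappa)$, which inverts (with $\log_\kappa$ decreasing) to the stated lower bound on $K$. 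You also correctly identify that the containment of the iterates in the shrinking quantization intervals and the contraction estimate must be proved by a single simultaneous induction, and that Assumption~\ref{ap5} is consumed precisely there.

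Two places remain at the level of assertion rather than proof. First, to recover the stated $\delta=\kappa(C_1+\sqrt{2L}C_2)/\bigl(L(\kappa+\gamma-1)(1-\gamma)\bigr)$ from your $\delta=\omega/(\kappa+\gamma-1)$ you need $\omega=\kappa(C_1+\sqrt{2L}C_2)/\bigl(L(1-\gamma)\bigr)$; saying $\omega$ is ``proportional to'' $C_1+\sqrt{2L}C_2$ does not pin down the factors $\kappa$, $1/L$ and $1/(1-\gamma)$, which come out of the specific perturbation analysis of the projected-gradient step and are not optional if the numerical bound on $K$ is to be the one claimed. Second, the simultaneous induction itself --- verifying that the inequalities in Assumption~\ref{ap5} with the particular coefficients $a_i,b_i$ imply that $z_i^{t,k}$ and $\nabla f_i^{t,k}$ stay within intervals of length $C_\alpha\kappa^k$ and $C_\beta\kappa^k$ for every $k$ --- is the technical core of the result and is only named, not carried out. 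As a reconstruction of the proof strategy the sketch is accurate; as a standalone proof it defers the two computations that make the constants come out right.
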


%%%%%%%%%%%%%%%%%%%%%%%%%%%%%%%%%%%%%%%%%%%%%%%%%%%%%%%%%%%%%%
%%%%%%%%%%%%%%%%%%%%%%%%%%%%%%%%%%%%%%%%%%%%%%%%%%%%%%%%%%%%%%
\section{Optimal Quantization Design}\label{section:Quantization Design}

\subsection{Optimal Quantization Design Formulation}

According to Theorem \ref{tm1}, we know that the required number of iterations $K$ to upper-bound the sub-optimality of the solutions given by Algorithm 1 at a certain level $\epsilon$. This upper-bound also establishes a relationship between the solution accuracy and the parameters of the quantizers $C_{\alpha}$, $C_{\beta}$, the quantization shrinkage constant $\kappa$ and the number of bits of the quantizers $n$. This is further related to the number iterations $K$ in Algorithm 1 given the real-time communication constraint $T \geq \ceil{nK} $ at each time step $t$, where $T$ is a communication data-rate. In the following, the set of the parameters $(\kappa,n,K,C_{\alpha},C_{\beta})$ is referred to as a quantization design. The relationship shown in (\ref{error upper bound}) is used to compute the optimal quantization design yielding the lowest sub-optimality upper bound for a fixed $T$.
\begin{subequations}\label{problem:UpperBound}
	\begin{align}
    	& && \min _{\kappa,n,K,C_{\alpha},C_{\beta},\epsilon} \quad \epsilon,\\
    	&\text{s.t. } &&  K \geq\left\lceil\log _{\kappa}\frac{\epsilon(1-\kappa)}{\rho+\delta+(1-\kappa)(\epsilon+\delta)}\right\rceil-1, \label{st1}\\
    	& && T \geq \ceil{nK}, \\
    	& && a_{1} (\epsilon+\rho)+a_{2} \frac{C_{\alpha}}{2^{n+1}}+a_{3} \frac{C_{\beta}}{2^{n+1}} \leq \frac{C_{\alpha}}{2}, \label{st2}\\
    	& && b_{1} (\epsilon+\rho)+b_{2} \frac{C_{\alpha}}{2^{n+1}}+b_{3} \frac{C_{\beta}}{2^{n+1}} \leq \frac{C_{\beta}}{2}.\label{st3}
	\end{align}
\end{subequations}

It can be seen that the optimization problem in \eqref{problem:UpperBound} is non-convex and computationally challenging for existing solvers. Alternatively, we propose a method to compute an approximation the optimal solution for problem \eqref{problem:UpperBound} by solving a set of convex sub-problems where each sub-problem is specified with a fixed ($\kappa,n,K$) configuration from the set $\{(\kappa,n,K)\mid T \geq \ceil{nK} , 1 - \gamma \leq \kappa \leq 1, \kappa \ \mathbf{rem} \ {\lambda } = 0 \}$. The operation $\kappa \ \mathbf{rem} \ {\lambda }$ denotes for the remainder of $\frac{\kappa}{\lambda}$ where $\lambda$ is a small and positive constant. The constant $\lambda$ indicates the resolution level at which $\kappa$ is selected. With fixed $(\kappa,n,K)$, the sub-problem given in (9) is convex and solvable by existent solvers.
\begin{align}\label{problem:UpperBoundInterger}
	& \min _{C_{\alpha},C_{\beta},\epsilon} \quad \epsilon,\\
	\text{s.t. } & \text{(\ref{st1}),(\ref{st2}),(\ref{st3})} \nonumber.
\end{align}

The (${\kappa^{a\star},n^{a\star},K^{a\star},C_{\alpha}^{a\star},C_{\beta}}^{a\star}$) configuration associated with the smallest sub-optimality upper bound $\epsilon^{a\star}$ out of the solutions of all sub-problems is said to be the best approximation of the optimal quantization design, namely the optimal solution of problem (\ref{problem:UpperBound}).

\begin{remark}
Since problem (\ref{problem:UpperBound}) is solved off-line, all possible $(n,K)$ pairs can be considered for solving (\ref{problem:UpperBoundInterger}) for a fixed $T$. Although $\kappa$ is a continuous variable, a high resolution (a small $\lambda$) can be chosen to uniformly select $\kappa$ from the range $(1-\mu,1)$. Therefore, by considering all possible $(n,K)$ such that $\ceil{nK} \leq T$ and choosing $\lambda$ to be a small and positive number, the solution (${\kappa^{a\star},n^{a\star},K^{a\star},C_{\alpha}^{a\star},C_{\beta}}^{a\star}$) provides a close estimation of the optimal quantization design.
\end{remark}

\subsection{Example}
Consider the following parametric distributed quadratic optimization problem:
\begin{subequations}\label{problem:QPproblem}
    \begin{align}
        &  \min _{z, z_{\mathcal{N}_{i}}} && \sum_{i=1}^{M} z_{\mathcal{N}_{i}}^{T}H_{i}z_{\mathcal{N}_{i}} +  (\zeta^t_i)^{T} h_i z_{\mathcal{N}_{i}}, \\
        &    \text{s.t. } && G_i z_i \leq h_i,\; z_{i}=F_{j i} z_{\mathcal{N}_{j}}, j \in \mathcal{N}_{i},\\
         &   && z_{\mathcal{N}_{i}}=E_{i} z,\; i=1,2, \cdots, M.
    \end{align}
\end{subequations}

\begin{figure}[t]
 \centering
 \subfigure[]{\includegraphics[width=0.49\hsize]{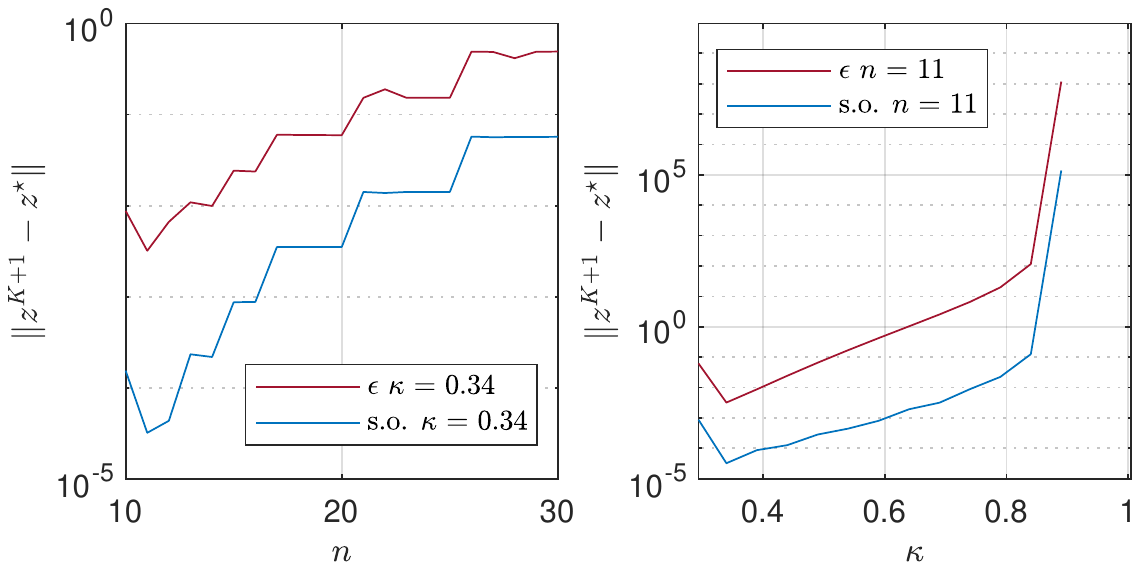}\label{bound_a}}
 \subfigure[]{\includegraphics[width=0.49\hsize]{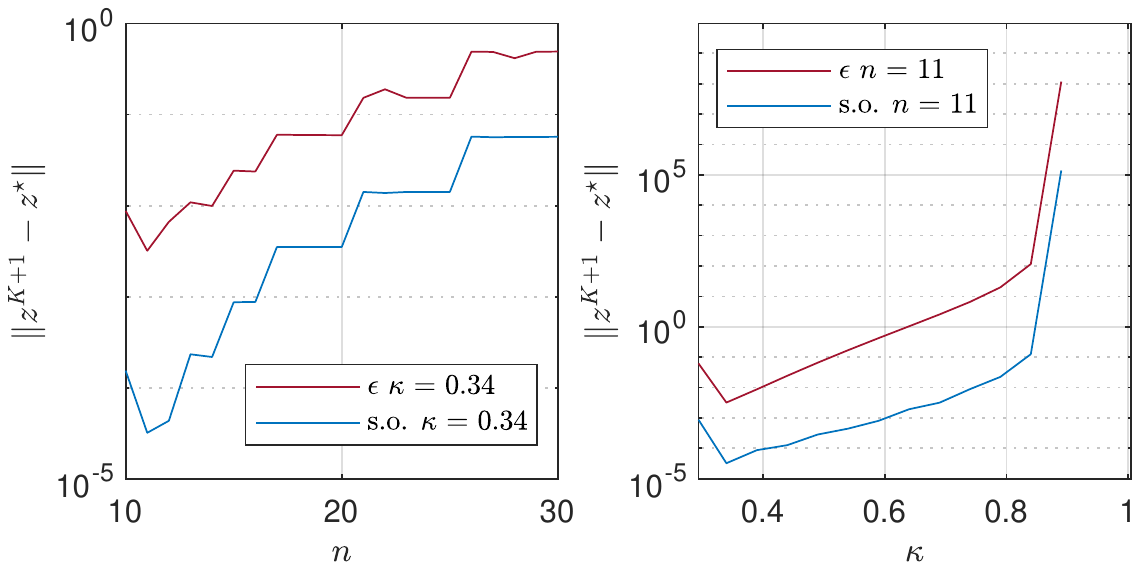}\label{bound_b}}
\caption{Complexity upper-bound in (5) (red curves) v.s. true sub-optimality (blue curves); (a) Fixing $\kappa$ at the optimal value 0.34 and varying the number of bits $n$; (b) Fixing the number at the optimal value 11 and varying $\kappa$.}
\label{img error upper bound}
\end{figure}

In this example, we randomly generate a connected graph with $6$ agents. The degree of the graph is equal to $2$. Each agent has 2 local variables. The matrix $H_i$ is set to be a randomly generated positive definite matrix and the vector $h_i$ is also randomly generated, for $i=1,\cdots,M$. The time-varying parameter $\zeta^t_i$ is uniformly sampled from a constant interval at each time step $t$. The polytopic constraint $G_i z_i \leq h_i$ are also randomly generated but with the guatrantee that more $50\%$ optimization variables can hit the constraints. Furthermore, we set the total number of bits transmitted at each step $t$ to be $T = 100$ bits. For this example, the parameters required to  compute of the optimal quantizaton design are $L = 21.99$, $L_m = 16.54$, $\sigma_{f} = 15.93$, $\gamma = 0.72$ and $\rho = 8.42$ obtained by sampling. We compute an approximation of the optimal quantizaton design using the approach presented in Section III.A and compare the result with the true sub-optimality achieved by Algorithm 1.

Fig. \ref{img error upper bound} presents a comparison between the sub-optimality upper-bound $\epsilon$ and the true sub-optimality achieved by the optimization algorithm using different quantization configurations, represented by the red curves and the blue curves, respectively. We can see that the red and blue curves show the same trend. The upper-bound $\epsilon$ is tight to the true sub-optimality. In Fig. \ref{bound_a}, we fix $\kappa$ and compare $\epsilon$ with the true sub-optimality using different $n$ ($K$ is set to be $\floor{T/n}$). For both cases (the red and blue curves), we observe that the lowest sub-optimality is achieved at $n=11$. In Fig. \ref{bound_a}, we fix the number of bits $n$ ($K$ is set to be $\floor{T/n}$) and compare $\epsilon$ with the true sub-optimality using different $\kappa$. For both cases (the red and blue curves), we observe that the lowest sub-optimality is achieved at $\kappa=0.34$.
%
%a fixed $\kappa$, the lowest bit setting with which a feasible solution to problem (\ref{problem:UpperBound}) can be found is designated as the threshold bit $n_t$, which decreases as $\kappa$ increases. The error upper bounds in both Fig. \ref{bound_a} and \ref{bound_b} suggest that the smallest error upper bounds appears when the pair $(n=11,\kappa=0.34)$ is chosen, coinciding with the smallest sub-optimality value of the corresponding sub-optimality curves. It can also be observed that the error upper bounds share similar trends with the sub-optimality curves. Therefore, by solving problem (\ref{problem:UpperBound}) we are able to find the optimal quantization design that minimizes the error upper bound, which indicates the smallest sub-optimality, or the best accuracy in solving the problem $\mathbb{P}^p(\zeta^t)$.
%

%%%%%%%%%%%%%%%%%%%%%%%%%%%%%%%%%%%%%%%%%%%%%%%%%%%%%%%%%%%%%%
%%%%%%%%%%%%%%%%%%%%%%%%%%%%%%%%%%%%%%%%%%%%%%%%%%%%%%%%%%%%%%
\section{DMPC with Limited Communication Data-rates for Multiple AUVs}\label{section:AUV Formation Control}

We first introduce a nonlinear model of an AUV and a discrete-time linear model. Then, we present a DMPC formulation for formation control of multiple AUVs. Furthermore, we propose a real-time framework for solving the DMPC problem subject to communication constraints. The framework consists of two stages: an offline stage and an online stage. In the offline stage, we use the method presented in Section III. A to find the optimal quantization design. In the online stage, we use the distributed optimization algorithm with in Algorithm 1 to solve the DMPC  problem with the off-line computed optimal quantization design. Finally, we briefly analyze the closed-loop properties. 

\subsection{AUV Dynamic Model}

\begin{figure}[t]
\centering
\includegraphics[width = \hsize]{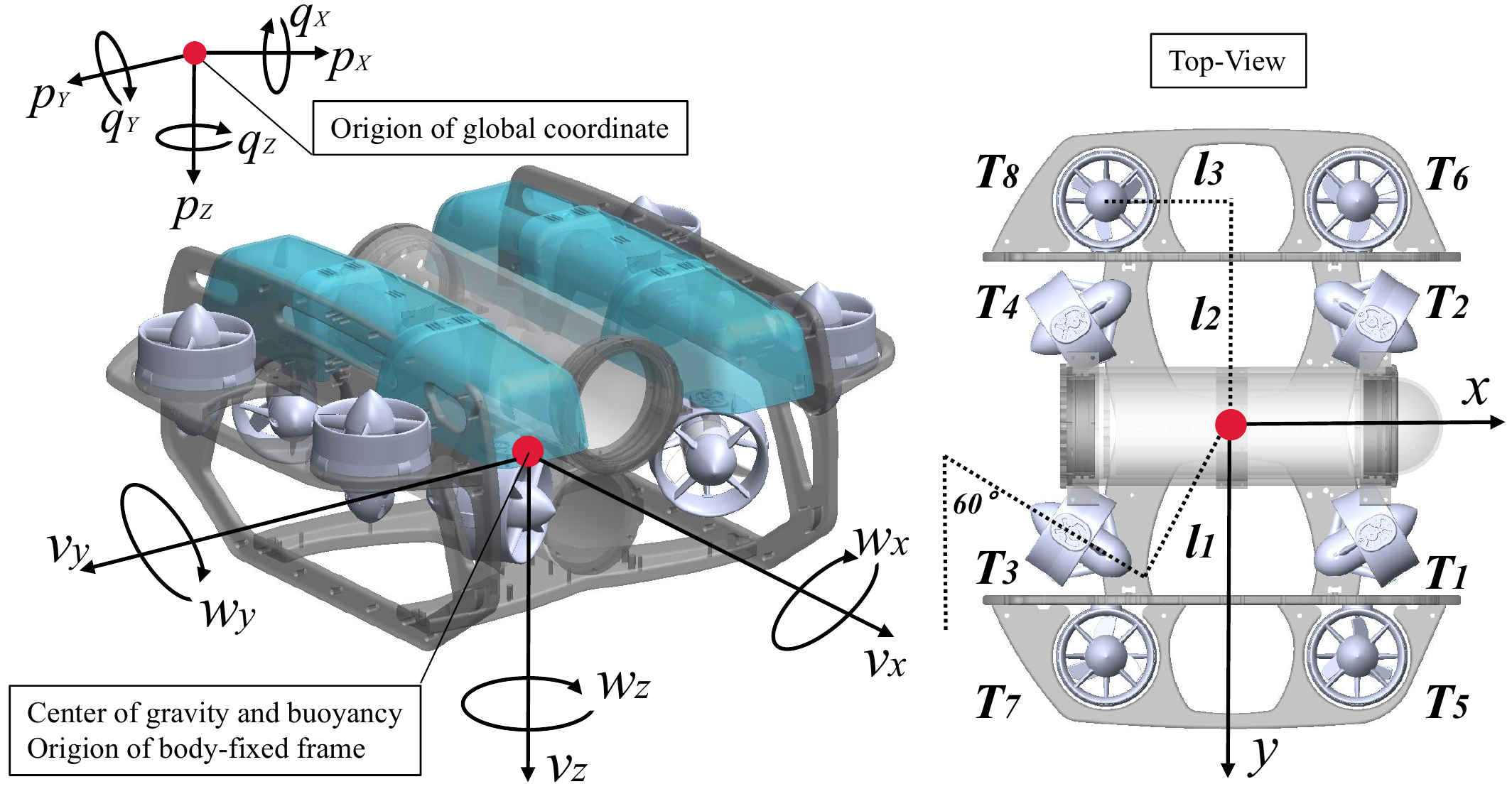}
\caption{AUV Schematic.}
\label{fig:AUV}
\end{figure}

In Fig. \ref{fig:AUV}, the schematic of a BlueROV2 underwater vehicle \footnote{$https://bluerobotics.com/store/rov/bluerov2/$} is shown. 
To model this AUV, let us first define the states and the control inputs of the system as $x=[\eta, \nu]^\top$ and $u=[T_1,T_2,T_3,T_4,T_5,T_6,T_7,T_8]^\top$, respectively. We use the vector $\eta=[p_X,p_Y,p_Z,q_X,q_Y,q_Z]^\top$ to denote the position and orientation components in the global coordinate, and the vector $\nu=[v_x,v_y,v_z,w_x,w_y,w_z]^\top$ to denote the 
linear and angular velocities in the body-fixed coordinate. It is assumed the center of gravity (C.G.) and the center of buoyancy (C.G.) collocate in the same horizontal plane as the horizontal thrusters $T1,T2,T3,$ and $T4$ as labeled in Fig. \ref{fig:AUV}. The nonlinear AUV model can be expressed as
\begin{align}\label{nonlinear model}
&\dot{x}= 
\begin{bmatrix}
J(\eta)\nu\\
M^{-1}\left(\tau_{c}(u)-C(\nu) \nu-D(\nu) \nu-g(\eta)\right)
\end{bmatrix},
\end{align}
where $J(\eta)=\text{diag}(R,W)$ with the transformation matrices $R$ and $W$ mapping the linear and angular velocities from the body-fixed frame to the global coordinate. The matrix $M = M_r + M_a$ represents the total mass. $M_r = \operatorname{diag} ([m,m,m,m,m,m])$ is the rigid body mass and $ M_a =  - \operatorname{diag} ([X_{\dot{v}_z},Y_{\dot{v}_y},Z_{\dot{v}_z},K_{\dot{\omega}_x},M_{\dot{\omega}_y},N_{\dot{\omega}_z}])$ is the added mass associated with linear and angular velocities. Moreover, $\tau_c (u)$ is the control input and can be formulated as follows:
\begin{equation*}
 \tau_c (u) = \tau u = 
\begin{bsmallmatrix}
\sin\theta &\sin\theta & \sin\theta & \sin\theta & 0 & 0 & 0 & 0\\
-\cos\theta & \cos\theta & \cos\theta & -\cos\theta & 0 & 0 & 0 & 0 \\
0 & 0 & 0 & 0 & 1 & 1 & 1 & 1 \\
0 & 0 & 0 & 0 & l_3  & -l_3 & l_3 & -l_3 \\
0 & 0 & 0 & 0 & l_2 & l_2  & -l_2 & -l_2 \\
l_1  & -l_1 & l_1 & -l_1 & 0 & 0 & 0 & 0
\end{bsmallmatrix}
u,
\end{equation*}
where $\theta = \frac{\pi}{3}$ is the tilt angle of the horizontal thrusters based on the AUV structure in Fig. \ref{fig:AUV}, $l_1$ represents the distance between the horizontal thrusters and the C.G., $l_2$ represents the distance between the vertical thrusters and the $X$-axis of the body-frame, and $l_3$ represents the distance between the vertical thrusters and the $Y$-axis of the body-frame. $C(\nu) \nu$, $D(\nu) \nu$ and $g(\eta) $ correspond to the Coriolis force, damping force and gravitational/buoyant force, which can be found in Appendix. We refer to \cite{Yang2019} for further details.

For the nonlinear AUV model in \eqref{nonlinear model}, we can use the first-order Taylor expansion at the point ($x_n = 0 $ and $u_n = 0$) to obtain a discrete-time linear model as follows:
\begin{equation}\label{local model}
    {x}(t+1) = Ax(t) + B u(t),
\end{equation}
where $ A =
\begin{bsmallmatrix}
I & \Delta t I \\
0 & I
\end{bsmallmatrix} $, $B = 
\Delta t M^{-1} \tau $. $\Delta t$ is the sampling time. 
\begin{remark}
    For the linear model in \eqref{local model}, the matrix pair~$(A,B)$ is controllable.
\end{remark}
\subsection{DMPC Formulation}
We now formulate the DMPC optimization problem for formation control of multiple AUVs. Under a leader-follower framework, the agent $i=1$ is assigned as the leader and agent $ i \geq 1$ as the followers. The agents share information according to a fixed communication graph $\mathcal{G}$. The agents connected by the edges in the edge set $\mathcal{E}$ keep their relative distances. Since the states of the agents are not coupled in the dynamics, the global system is also controllable as long as the individual agents are controllable. In general, the DMPC problem $\mathbb{P}^s(x,x_r,u_r)$ can be formulated as follows.
\begin{subequations}\label{soft DMPC problem}
\begin{alignat}{3}
&\min_{\substack{\bar{x}_i(l),\bar{u}_i(l)}} && \;\; \sum_{i=1}^{M} \sum_{l=0}^{N-1} \ell_{i}\left(\bar{x}_{i}(l)-x_{r_i}, \bar{u}_{i}(l)-u_{r_i}\right)\nonumber \allowdisplaybreaks\\
& &&\hspace{-10mm} + \sum_{(i,j)\in\mathcal{E}}\sum_{l=0}^{N-1} \ell^d_{ij}(G_{ij} [\bar{x}_{i}^{\top}(l), \bar{x}^{\top}_j(l)]^{\top} - G_{ij} [x_{r_i}^{\top}, x^{\top}_{r_j}]^{\top}) \nonumber \allowdisplaybreaks \\
& && \hspace{-10mm} +\sum_{i=1}^{M} \ell_{i}^{f}\left(\bar{x}_{i}(N)-x_{r_i}\right) \label{eq:total DMPC cost function} ,\allowdisplaybreaks \\
&\text{s.t. } && \bar{x}_i(0)=x_i(t), \label{cons1} \allowdisplaybreaks\\
& && \bar{x}_i(l+1)=A \bar{x}_i(l)+ B \bar{u}_i(l), \allowdisplaybreaks\\
&  && G_{u_i} \bar{u}_i(l) \leq h_{u_i},\label{stage input constraint} \allowdisplaybreaks\\
&  && G_{x_i} \bar{x}_i(l) \leq h_{x_i}, \label{stage state constraint} \allowdisplaybreaks\\
&  && \bar{x}_{i}(N) \in \mathcal{E}_{f_{i}}^{s}\left(x_{r_{i}}\right), \;\;\forall i=1, \cdots, M,\label{terminal constraint}
\end{alignat}
\end{subequations}
where $x_i(t)$ is a measured system state of~\eqref{local model} at a time instant $t$. $\bar{x}_i(l)$ and $\bar{u}_i(l)$ denote the state and control input of agent $i$, respectively. $x_{r_i}$ and $u_{r_i}$ denote the state and input reference of agent $i$. The polytopic constraints \eqref{stage input constraint} and \eqref{stage state constraint} denote the local input constraint and the local state state constraint, respectively. $\mathcal{E}^s_{f_i}(x_{r_i})$ represents the local terminal constraint. 

The stage cost is defined as $\ell_i (\bar{x}_{i}(l)-x_{r_i}, \bar{u}_{i}(l)-u_{r_i} ) = \|\bar{x}_{i}(l)-x_{r_i}\|_{Q_i}^2+\| \bar{u}_{i}(l)-u_{r_i}\|_{R_i}^2$. The cost function for formation control is defined as $\ell^d_{ij}(G_{ij} [\bar{x}_{i}^{\top}(l), \bar{x}^{\top}_j(l)]^{\top} - G_{ij} [x_{r_i}^{\top}, x^{\top}_{r_j}]^{\top}) = \|G_{ij} [\bar{x}_{i}^{\top}(l), \bar{x}^\top_j(l)]^{\top}- G_{ij} [x_{r_i}^{\top}, x^\top_{r_j}]^{\top}\|_{S_{ij}}^2$. The terminal cost is defined as $\ell_{i}^{f} \left(\bar{x}_{i}(N)-x_{r_i}\right)= \|\bar{x}_i(N)-x_{r_i}\|_{P_i}^2 $. The weighting matrices $Q_i$, $R_i$, $P_i$, and $S_{ij}$ are set to be positive definite matrices. $P = \operatorname{blkdiag}(P_i,\ldots,P_M)$ can be obtained by solving the algebraic Riccati equation with $ Q = \operatorname{blkdiag}(Q_i,\ldots,Q_M)$ and $R = \operatorname{blkdiag}(R_i,\ldots,R_M)$.

In the DMPC problem \eqref{soft DMPC problem}, formation maintenance is achieved by penalizing the difference between the relative distances and the reference relative distances $G_{ij} [\bar{x}_{i}^{\top}(l), \bar{x}^\top_j(l)]^{\top}- G_{ij} [x_{r_i}^{\top}, x^\top_{r_j}]^{\top}$. Given a set of reference setpoints $(p_{r_{X_i}}, p_{r_{Y_i}}, p_{r_{Z_i}})$, for agent $i=1, \cdots, M$, the reference setpoint for the leader, i.e., agent $1$, is set to be $x_{r_1} = (p_{r_{X_1}}, p_{r_{Y_1}}, p_{r_{Z_1}})$. The reference setpoint for the followers, i.e., agent i for $i>1$, is set to be $x_{r_j} = (p_{r_{X_i}}-d_{r_{X_{ij}}}, p_{r_{Y_i}}-d_{r_{Y_{ij}}}, p_{r_{Z_i}}-d_{r_{Z_{ij}}})$. Among them, $d_{r_{X_{ij}}}, d_{r_{Y_{ij}}},$ and $d_{r_{Z_{ij}}}$ are given references for the relative distances in the $X-$axis, $Y-$axis and $Z-$axis, respectively. The matrix $G_{ij}$ is used to select $p_{X_i}-p_{X_j}, p_{Y_i} -p_{Y_j}, p_{Z_i} -p_{Z_j}$ from the concatenated vector $[\bar{x}_{i}^{\top}(l), \bar{x}^\top_j(l)]^{\top}$. Similarly, $d_{r_{X_{ij}}}, d_{r_{Y_{ij}}},d_{r_{Z_{ij}}}$ can be selected by using the matrix $G_{ij}[x_{r_i}^{\top}(l), x^\top_{r_j}(l)]^{\top}$. 

\begin{remark}\label{remark:DMPC reformulation}
$\mathbb{P}^r(x,x_r,u_r)$ can be reformulated as a distributed QP in the form of \eqref{problem:QPproblem}. For each agent $i$, the optimization variable is set to be $z_i = [\bar{x}_i^{\top}(0),\cdots,\bar{x}_i^{\top}(N),\bar{u}_i^{\top}(0),\cdots,\bar{u}_i^{\top}(N-1)]^{\top}$. The local formation cost function can be written as $\|[x_{i}^{\top}(l), x^\top_j(l)]^{\top}-[x_{r_i}^{\top}, x^\top_{r_j}]^{\top}\|_{\tilde{S}_{ij}}^2$ with $\tilde{S}_{ij} = G_{ij}^\top S_{ij} G_{ij}$. The augmented block-diagonal matrix $\textup{H}_i$ can be built with $Q_i$, $R_i$, $\tilde{S}_{ij}$ and $P_i$. The local constraint $\mathbb{C}_i$ can be obtained by reformulating (13b)-(13f). 
\end{remark}

We now summarize the real-time DMPC framework for solving and implementing the DMPC problem in \eqref{soft DMPC problem} with a limited communication data-rate~$T$ in Algorithm 2. In the off-line stage, the optimal quantization design $(\kappa,n,K,C_{\alpha},C_{\beta})$ is obtained by solving the optimization problem in \eqref{problem:UpperBound}. Then, the distributed optimization algorithm in Algorithm 1 with the optimal quantization design obtained from the off-line stage is used to solve the DMPC problem in \eqref{soft DMPC problem} in the on-line stage.

%at the step $t$ with the warm starting strategy in (13). Finally, the control input extracted from the solution given by Algorithm 2 at $t$ is applied to the multi-AUV system.

\begin{algorithm}[t]
\textbf{\textup{Off-line stage:}}\\
1. Consider $\mathbb{P}^s(x(t),x_r,u_r)$ in (\ref{soft DMPC problem}), reform it to be (\ref{problem:QPproblem}) and find $M, L, L_m, d,$ $ m, \gamma$, and $\tau$. Find $\rho$ specified by (\ref{solution upper bound}) through sampling;\\
2. Given $M, T, L, L_m, d, m, \gamma, \tau, \text{and } \rho$, solve problem (\ref{problem:UpperBound}), via solving the sub-problems in (\ref{problem:UpperBoundInterger}), to calculate the optimal quantization design $(\kappa^\star,n^\star,K^\star,C_{\alpha}^\star,C_{\beta}^\star)$ and the corresponding optimal error bound $\epsilon^\star$; \\
3. Given the initial state of the system $x_{i}(0) $, for $i=1,\cdots,M$, calculate ${z}^{0,0}$ s.t. $\|{z}^{0,0}-{z}^{0,\star}\| \leq \epsilon^\star$. \\
\textbf{\textup{On-line stage:}}\\
\textbf{\textup{for}} $t = 0,1,\cdots$ \textbf{\textup{do}} \\
4. Measure $x_i(t)$;\\
5. Execute step (1)-(10) in Algorithm 1 using the quantization design $(\kappa^\star,n^\star,K^\star,C_{\alpha}^\star,C_{\beta})^\star$ to solve $\mathbb{P}^s(x(t),x_r,u_r)$ in (13) with the initial solution ${z}^{t,0}$;\\
6. Update the initial solution for $t+1$: ${z}^{t+1,0} = {z}^{t, K+1}$ ;\\
7. Extract $u(0)^{t,K+1}_{i}$ from ${z}^{t,K+1}_i$ and apply it to agent $i$, $\forall{i\in\{1,\cdots,M\}}$; \\
\textbf{\textup{end}}
\caption{Real-time DMPC framework}
\end{algorithm}

\begin{remark}
The parameter $\rho$ introduced in Assumption \ref{ap3} is determined by simulation. For each sample, the system is driven from a random initial state in the set of admissible initial states to the origin using Algorithm 2. The parameter $\rho$ is set to be the maximum distance between the optimal solutions of any two sampled states. 
% For a real-world application, the accuracy of $\rho$ obtained through simulation is affected by modeling. A more conservative $\rho$ results in more conservative optimal quantization design which, in turn, produces larger sub-optimality.
\end{remark}

\subsection{Closed-loop Analysis for the DMPC in (13) with a Limited Communication Data-rate}

%Due to limited-data rate, Algorithm 2 yields sub-optimal solution to $\mathbb{P}^s(x,x_r,u_r)$ in real-time. Local feasibility of the sub-optimal solution is guaranteed by the re-projection $\operatorname{Proj}_{\mathbb{C}_{\mathcal{N}_{i}}}\left(\hat{z}_{\mathcal{N}_{i}}^{t,k}\right)$ at all $t \geq 0$. With re-projection, the polytopic state constraint \eqref{stage state constraint} is both locally feasible and globally feasible. With re-projection, the soft-constrained polytopic relative distance constraint \eqref{stage virtual state constraint} is only locally feasible. Satisfaction of \eqref{stage virtual state constraint} is guaranteed because of the soft constraint. For $\mathbb{P}^s(x,x_r,u_r)$, let $\mathcal{X}^{s}$ be the set of feasible initial states. For all $x(t+1) \in \mathcal{X}^{s}$, the sub-optiaml solution to $\mathbb{P}^s(x(t),x_r,u_r)$ yielded by Algorithm 2 is also a feasible solution to $\mathbb{P}^s(x(t+1),x_r,u_r)$. Based on the proof of Theorem 5.2 in \cite{Zeilinger2014}, we now discuss the closed-loop stability in the following corollary.

% 1. LQR gives a lyupunov function
% 2. site limon uniform continuiyty
% 3. site ISS

In this section, we analyze the closed-loop properties for the system with the proposed DMPC formulation in (13) following the implementation steps in Algorithm 2 in the following corollary.
% As implemented in Algorithm 2, we discuss the recursive feasibility of the closed-loop system \eqref{local model} with the DMPC \eqref{soft DMPC problem}

\begin{corollary}
    Consider the distributed MPC problem in (13) and Algorithm 2. Given the total number of bits $T$, the closed-loop system \eqref{local model}-\eqref{soft DMPC problem} with the real-time DMPC framework in Algorithm 2 is recursively feasible and input-to-state stable (ISS).
\end{corollary}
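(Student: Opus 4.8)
The plan is to read Algorithm~2 as an \emph{inexact} DMPC scheme in which the returned iterate $z^{t,K+1}$ is, at every time step $t$, a feasible-but-suboptimal solution of $\mathbb{P}^s(x(t),x_r,u_r)$ with a uniformly bounded optimization error, and then to run the standard terminal-ingredients MPC argument with that error playing the role of a bounded disturbance. The first step is to check that the hypotheses of Theorem~\ref{tm1} hold online for the reformulated QP of Remark~\ref{remark:DMPC reformulation}: Assumption~\ref{ap1} follows from positive definiteness of $Q_i,R_i,S_{ij},P_i$; Assumption~\ref{ap3} holds with the sampled $\rho$; Assumption~\ref{ap4} holds at $t=0$ by step~3 of the off-line stage, $\|z^{0,0}-z^{0,\star}\|\le\epsilon^\star$; and the warm-start $z^{t+1,0}=z^{t,K+1}$ together with Assumption~\ref{ap3} propagates $\|z^{t,0}-z^{t,\star}\|\le\epsilon^\star$ to all $t$. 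Since the off-line design of Section~\ref{section:Quantization Design} returns $(\kappa^\star,n^\star,K^\star,C_\alpha^\star,C_\beta^\star)$ satisfying $T\ge\lceil n^\star K^\star\rceil$ and Assumptions~2--4, Theorem~\ref{tm1} gives $\|z^{t,K+1}-z^{t,\star}\|\le\epsilon^\star$ for all $t\ge0$; in particular the applied inputs $u_i^{t,K+1}(0)$ originate from a solution whose distance to the exact DMPC optimizer is at most $\epsilon^\star$.

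Next I would establish \textbf{recursive feasibility} by the usual shifted-candidate construction. From the (inexact) solution at time $t$ one builds a candidate at $t+1$ by dropping the first stage, shifting the predicted trajectory one step forward, and appending the local terminal feedback associated with the Riccati-based terminal cost $P_i$; positive invariance of $\mathcal{E}^s_{f_i}(x_{r_i})$ under that feedback then places the shifted terminal state back in the terminal set. Because the dynamics are enforced exactly inside Algorithm~1 and the closed-loop state satisfies $x_i(t+1)=A x_i(t)+B u_i^{t,K+1}(0)=\bar x_i^{\,t}(1)$, the shifted trajectory is a genuine prediction from $x_i(t+1)$; the only constraints that can be marginally violated by the candidate are the consensus/coupling relations $z_i=F_{ji}z_{\mathcal{N}_j}$ and the terminal inclusion, and the violation there is $O(\epsilon^\star)$. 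This is precisely the role of the \emph{soft} terminal set $\mathcal{E}^s_{f_i}$ in \eqref{terminal constraint}: its relaxation margin is sized to absorb this $O(\epsilon^\star)$ mismatch, so the shifted-and-appended candidate is feasible for $\mathbb{P}^s(x(t+1),x_r,u_r)$, and recursive feasibility follows by induction from the initialization in step~3.

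For \textbf{input-to-state stability} I would show that the optimal value function $V^\star(x,x_r,u_r)$ of \eqref{soft DMPC problem} is an ISS-Lyapunov function for the closed loop. Standard arguments give quadratic bounds $\alpha_1\|x-x_r\|^2\le V^\star\le\alpha_2\|x-x_r\|^2$ (the lower bound from the stage cost, the upper bound from controllability together with the terminal-set/terminal-cost construction). Evaluating $V^\star$ along the closed-loop trajectory and using the shifted candidate as a suboptimal competitor yields a descent inequality of the form $V^\star(x(t+1))-V^\star(x(t))\le-\sum_i\ell_i\bigl(x_i(t)-x_{r_i},\,u_i^{t,K+1}(0)-u_{r_i}\bigr)+\sigma(\epsilon^\star)$, where $\sigma$ is a class-$\mathcal{K}$ function bounding the combined effect of the optimization error and the soft-constraint penalty on the value function, and Lipschitz continuity of the quadratic DMPC cost in $z$ (a consequence of Assumption~\ref{ap1}) is what makes $\sigma$ well defined. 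Since $\epsilon^\star$ is a fixed constant determined by $T$, this is exactly the ISS-Lyapunov decrease condition with the optimization error as input; standard ISS theory then gives input-to-state stability with respect to $\epsilon^\star$, equivalently convergence of the formation error into a neighborhood of the origin whose size is a class-$\mathcal{K}$ function of $\epsilon^\star$ and therefore shrinks as the data-rate $T$ increases.

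The main obstacle I anticipate is the bookkeeping that links the abstract bound $\|z^{t,K+1}-z^{t,\star}\|\le\epsilon^\star$ in the lifted optimization space to concrete statements about the physical trajectory: quantifying precisely how much the consensus constraints and the terminal inclusion can be violated in terms of $\epsilon^\star$, and then verifying that the relaxation in $\mathcal{E}^s_{f_i}$ and the penalty weights are large enough to cover that violation while still preserving the quadratic sandwich bounds on $V^\star$. Everything else is the textbook stabilizing-DMPC argument; the contribution is isolating $\epsilon^\star$ and tying it, through Theorem~\ref{tm1} and the optimal quantization design, to the communication budget $T$.
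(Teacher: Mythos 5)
Your proposal is sound in outline but takes a substantially different and more elaborate route than the paper. For feasibility, the paper does not use a shifted-candidate argument at all: it observes that the re-projection steps of Algorithm~1 (Step 5, and the projection onto $\mathbb{C}_i$ in Step 10) keep the iterates in the local constraint sets, which after the reformulation of Remark~\ref{remark:DMPC reformulation} encode the dynamics, state and input constraints, so the returned sub-optimal solution $z^{t,K+1}_i$ is feasible by construction at every $t$; recursive feasibility is claimed directly from that. Your tail-sequence construction with terminal-set invariance is the classical alternative; it has the advantage of actually establishing that the feasible set at $t+1$ is nonempty (which the projection argument implicitly presupposes) and of quantifying the $O(\epsilon^\star)$ mismatch in the coupling and terminal constraints, but be aware that the paper never states that $\mathcal{E}^s_{f_i}$ carries a relaxation margin sized to absorb that mismatch --- that is your own (reasonable) reading of the superscript $s$, not something the paper supplies. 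For stability, the paper does conceptually what you describe --- treat the sub-optimality of $u(0)^{t,K+1}_{i}$ as a disturbance $w$ bounded by a constant determined by $\epsilon$ --- but it then invokes uniform continuity of the closed loop in $x$ and $w$ and cites \cite[Theorem 4]{Limon2009} as a black box, rather than constructing the explicit ISS-Lyapunov decrease $V^\star(x(t+1))-V^\star(x(t))\le-\sum_i\ell_i+\sigma(\epsilon^\star)$ as you do. Your version buys a self-contained argument and an explicit link between the ultimate invariant neighborhood and the data-rate $T$; the paper's buys brevity at the cost of delegating the Lyapunov bookkeeping (and the nonemptiness and soft-constraint issues you correctly flag as the main obstacle) to the cited reference.
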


\begin{proof}
    \textit We first discuss the feasibility of the solution ${z}^{t,K+1}_i$ returned by Algorithm 2. Since each agent has only local dynamical, state and input constraints, the re-projection step (Step 5 in Algorithm 1) guarantees that the sub-optimal solution ${z}^{t,K+1}_i$ returned by Algorithm 2 is feasible for all $t\geq 0$. We then study the stability property. In the closed-loop system, the computational error (sub-optimality) of the control input $u(0)^{t,K+1}_{i}$ generated by Algorithm 2 can be considered as a disturbance $w$ in the closed-loop system. We know that this induced disturbance is bounded by a constant determined by the sub-optimality level $\epsilon$. Due to the fact that the closed-loop system is uniformly continuous in $x$ and $w$, the ISS property of the closed-loop system is implied by following \cite[Theorem 4]{Limon2009}.
\end{proof}    

\section{Case Study: a Multi-AUV System}\label{section:Formation Control with sub-optimality}

In this section, we apply the DMPC framework with the optimal quantization design applied to the multi-agent system with three AUVs. For the agents $i \in \mathcal{V} = \{1,2,3\}$, agent 1 is assigned as the leader and agent 2, 3 as the followers. The agents share information according to a fixed undirected graph $\mathcal{G}$, where the edges are $\mathcal{E} = \{  (1,2 ), (1,3 ) \}$. Also, followers only keep relative distance with respect to the leader. Each agent can be modelled by the AUV model in \eqref{local model}. The model parameters of the AUV are given as follows: $m = 11$, $W = B = 107.8$ (with ballast), $X_{\dot{v}_z} = -2.8$, $Y_{\dot{v}_y}=-3$, $Z_{\dot{v}_z}=-3.2$, $K_{\dot{\omega}_x}=-0.05$, $M_{\dot{\omega}_y}=-1$, $N_{\dot{\omega}_z}=-0.3$, $X_{v_x|v_x|}|v_x|,Y_{v_y|v_y|}|v_y|,Z_{v_z|v_z|}|v_z|=10$ and $K_{\omega_x|\omega_x|}|\omega_x|,M_{\omega_y|\omega_y|}|\omega_y|,N_{\omega_z|\omega_z|}|\omega_z|=4$. 

The state constraints are considered for three AUVs as $p_X \in [5,-10]$, $p_Y,p_Z \in [5,-5]$, $q_X,q_Y,q_Z \in [\frac{\pi}{3},-\frac{\pi}{3}]$, $v_x,v_y,v_z \in [1,-1]$, and $\omega_x,\omega_y,\omega_z \in [\frac{\pi}{6},\frac{\pi}{6}]$. Furthermore, the input constraints are set as $T_1,T_2,T_3,T_4,T_5,T_6,T_7,T_8 \in [2,-2]$ for three AUVs.

The tracking reference setpoints are set as $(0,0,0)$, $(-2,1,0)$ and $(-2,-1,0)$ for three AUVs. Three AUVs are expected to maintain a formation with desired relative distances as follows: $p_{X_{1}}-p_{X_{2}} = 2$, $p_{Y_{1}}-p_{Y_{2}} = -1$, $p_{Z_{1}}-p_{Z_{2}} = 0$, $p_{X_{1}}-p_{X_{3}} = 2$, $p_{Y_{1}}-p_{Y_{3}} = 1$ and $p_{Z_{1}}-p_{Z_{3}} = 0$. The initial position of three AUVs are $(-8,0.5,-0.5)$, $(-7,1.5,1)$ and $(-8,-0.5,-1)$, respectively. A sampling time $\Delta t = 0.1$ s is considered and the communication data-rate is $T =100$ bits at every step $t$.  

\subsection{Closed-loop Simulation Results}

We first show simulation results to demonstrate the stability of the closed-loop multi-AUV system controlled with the proposed DMPC framework with the optimal quantization design. The agents are modeled by the discrete linear model \eqref{local model}. The DMPC with optimal quantization design has been implemented following the steps described in Algorithm 2. In Fig. 3(a), the position of agent 2 is measured as its distance from its reference setpoint $(-2,-1,0)$. It can be seen that agent 2 moves toward the reference setpoint slowly from $t = 0$ to $2$ because it prioritizes formation tracking first. The position of agent 2 then converges to the reference setpoint after $t = 10$. As shown in Fig. 3(b), the control inputs saturate between $t = 0$ and $t = 6$ corresponding to the process of formation control.

\begin{figure}[t]
 \centering
 \subfigure[]{\includegraphics[width=0.44\hsize]{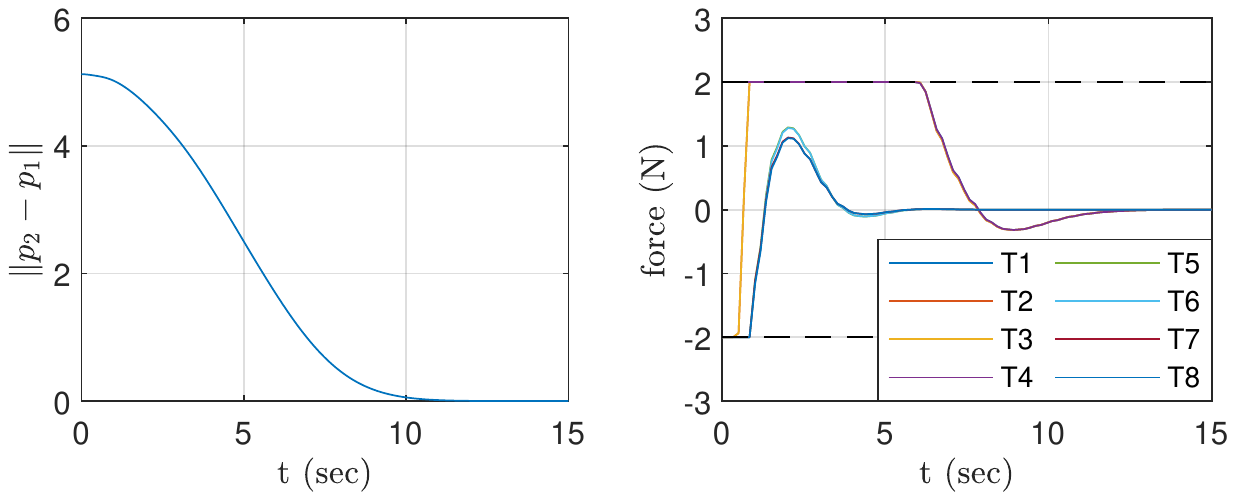}\label{converge_a}}
 \subfigure[]{\includegraphics[width=0.45\hsize]{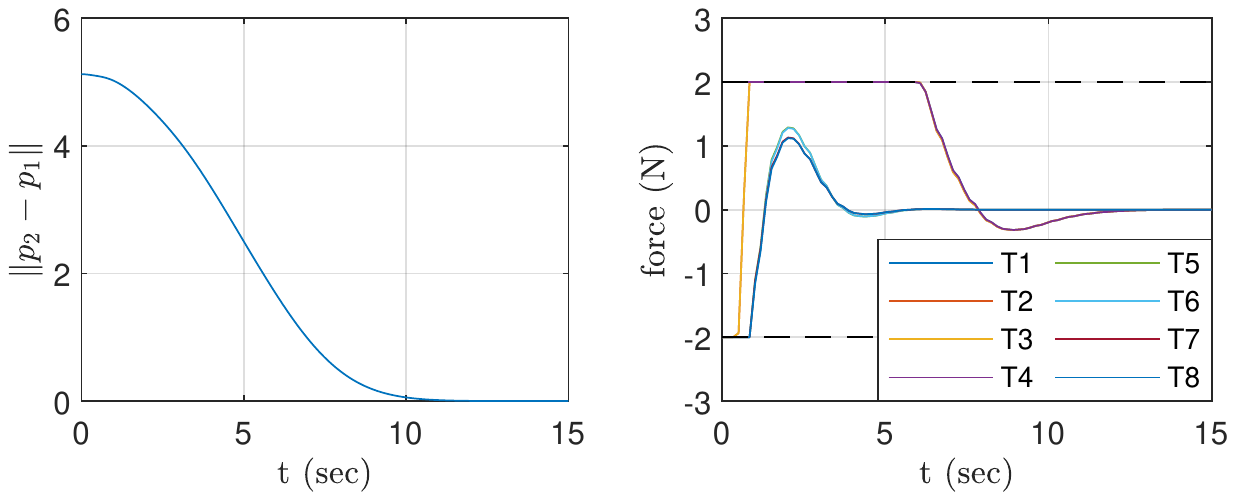}\label{converge_b}}
\caption{(a) Distance of agent 2 from its reference setpoint; (b) Control input of agent 2.}
\label{global error upper bound}
\end{figure}

\subsection{Control Performance with the Quantization Design}

We next show simulation results to compare control performance with two quantization designs. In this case, no terminal constraint is used in the DMPC framework in \eqref{soft DMPC problem}. The positive definite matrix $P_i$ in the terminal cost $\ell_i^f(x_i(N)-x_{r_i})$ is replaced by $Q_i$. The DMPC problem was implemented following the steps described in Algorithm 2 to control the system of three AUVs with non-linear model \eqref{nonlinear model}. By solving the optimization problem~\eqref{problem:UpperBound}, we can obtain the optimal quantization design to be $(\kappa=0.51,n=20,K=5,C_\alpha =65.85,C_\beta = 66.23)$. The sub-optimal quantization design is chosen as $(\kappa=0.95,n=30,K=3,C_\alpha =693.51,C_\beta = 693.72)$. For comparison of the two different quantization designs, random state disturbances are sampled from the intervals $[-0.1,0.1]$ and $[-0.03,0.03]$ with uniform distribution at each time step $t$. Then, these sampled disturbances are added to the linear states $p_X,p_Y,p_Z$ and angular states $q_X,q_Y,q_Z$, respectively.

\begin{figure}[t]
 \centering
 \subfigure[]{\includegraphics[width=0.49\hsize]{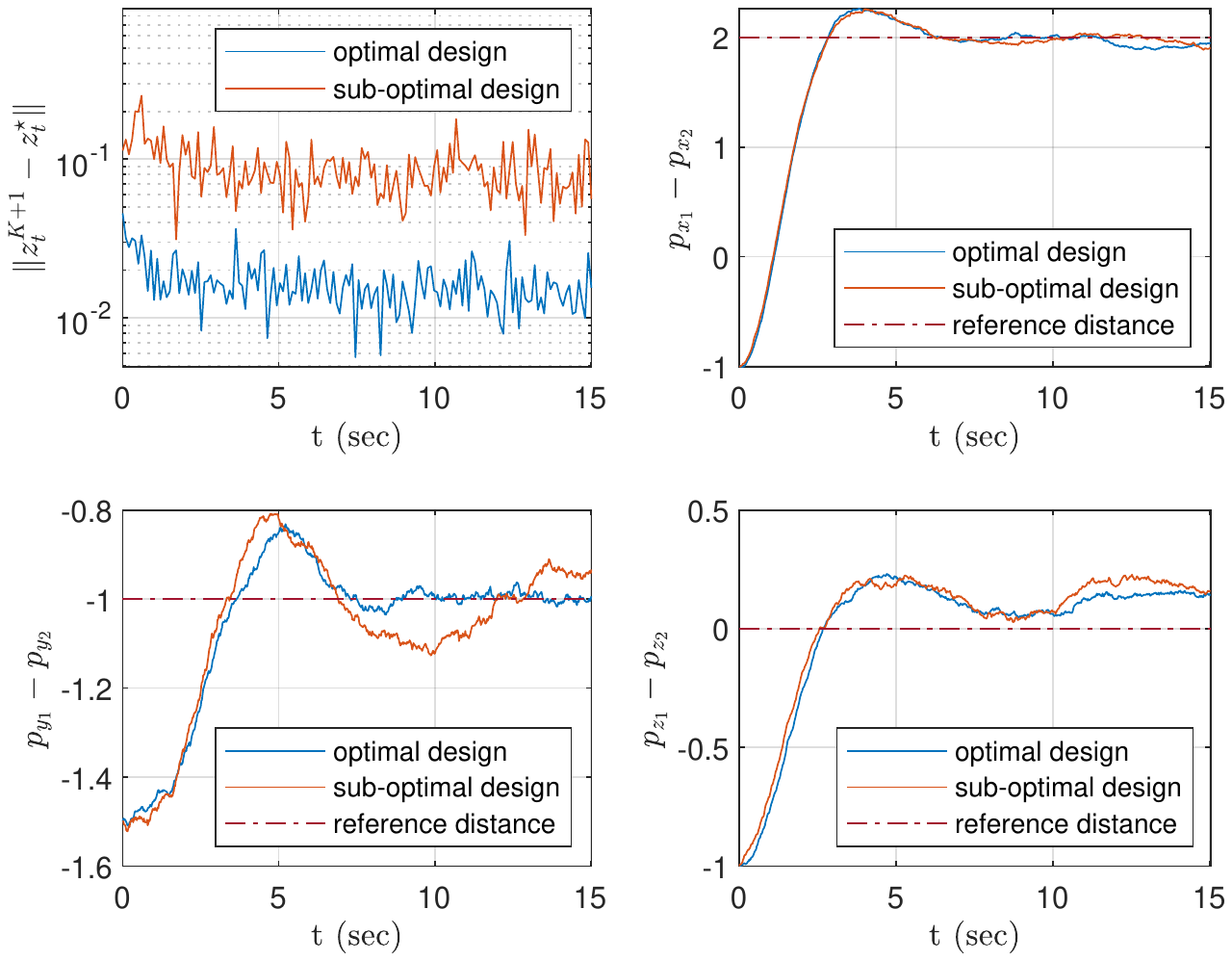}\label{perform_a}}
 \subfigure[]{\includegraphics[width=0.475\hsize]{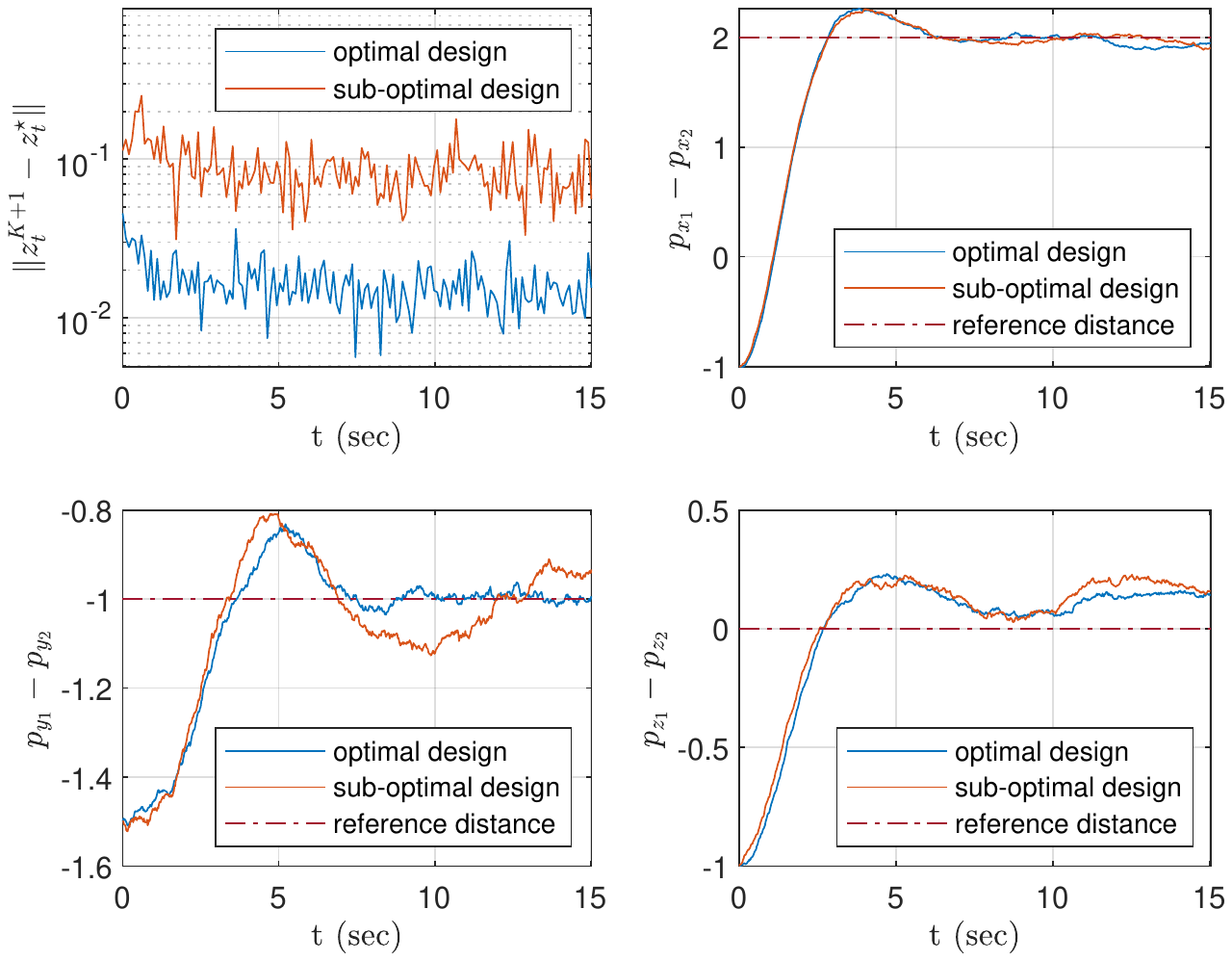}\label{perform_b}}\\
 \subfigure[]{\includegraphics[width=0.49\hsize]{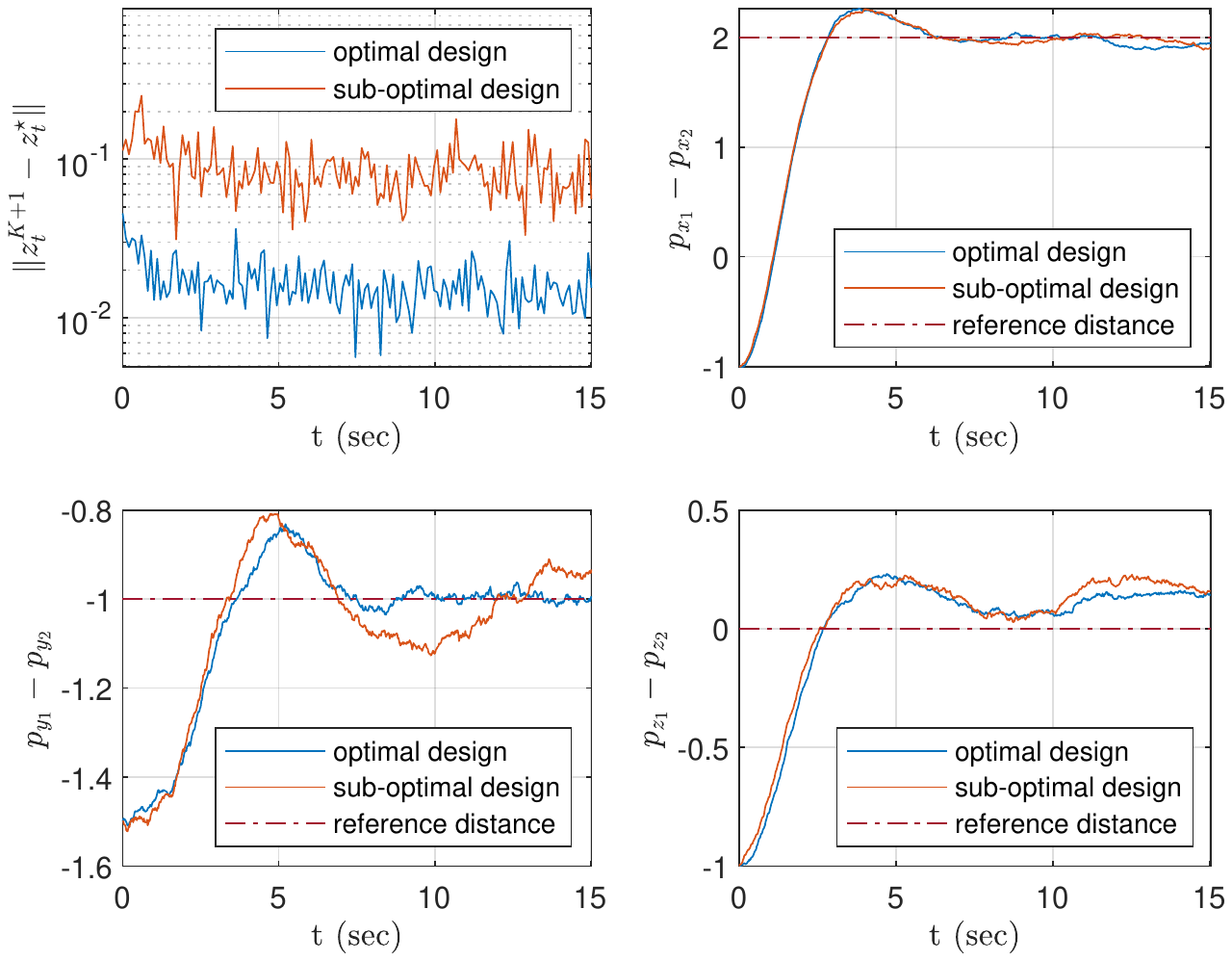}\label{perform_c}}
 \subfigure[]{\includegraphics[width=0.48\hsize]{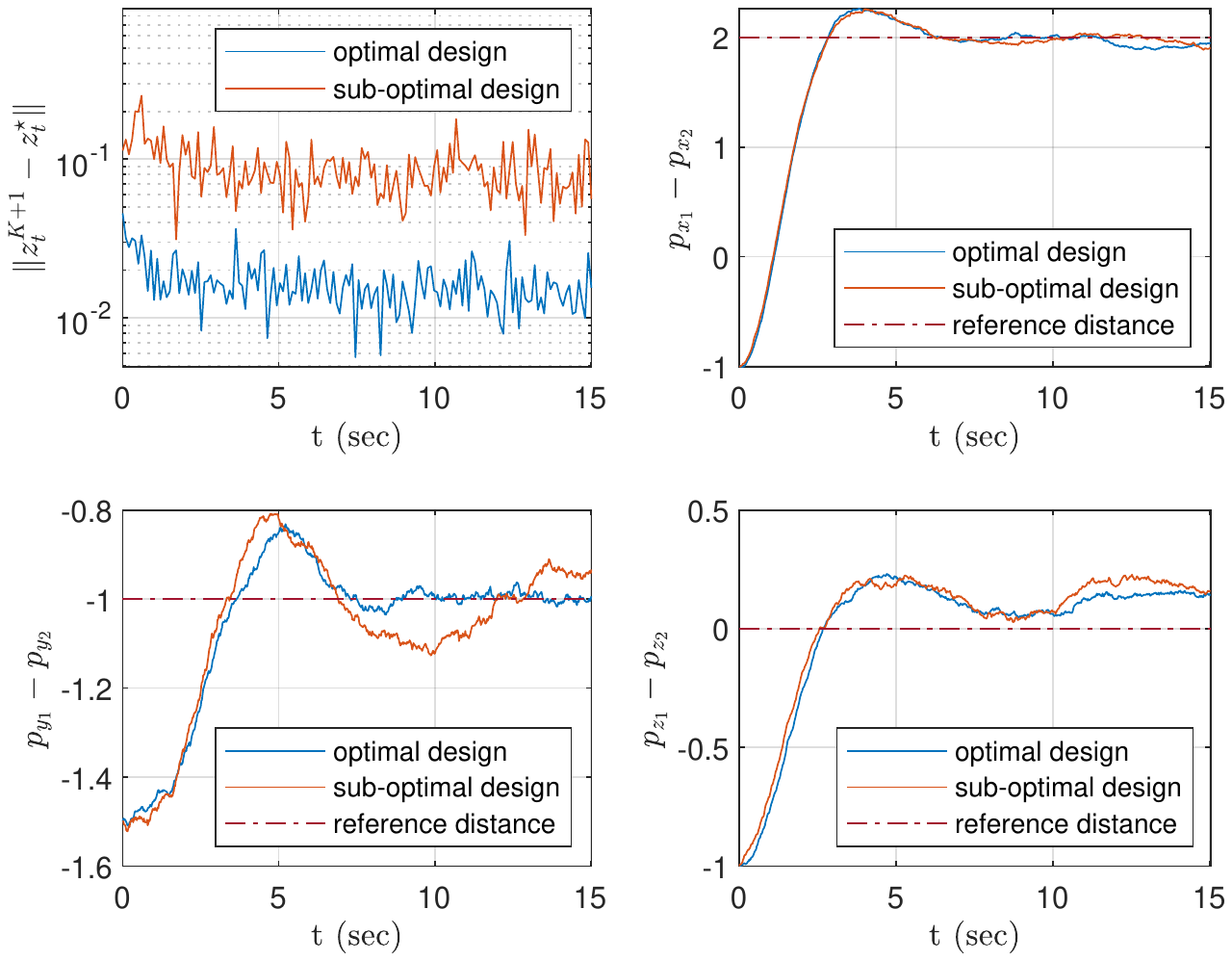}\label{perform_d}}
\caption{Control performance comparison between the optimal quantization design and a sub-optimal quantization design; (a) Change in sub-optimality; (b) Relative distance in $X$-axis; (c) Relative distance in $Y$-axis; (d) Relative distance in $Z$-axis.}
\label{global error upper bound}
\end{figure}

The sub-optimal solutions obtained with the optimal and sub-optimal quantization design are shown in In Fig. 4(a). It can be seen that the sub-optimality from the optimal quantization design has an order around $10^{-2}$ while the one from the sub-optimal quantization design has an order around $10^{-1}$. In Fig. 4(b)-(d), the relative distances between the leader and agent 2 in the $X$-axis, $Y$-axis and $Z$-axis are shown. As shown in the dark-red line, for the $Y$-axis, the desired relative distance can be achieved with the optimal quantization design. For comparison, worse tracking performance for the relative distance can be observed with the sup-optimal quantization design. At $t = 10$, the tracking error is 0.015 m when using the optimal quantization design, while the tracking error is 0.136 m when using the sub-optimal quantization design. In the $Z$-axis, the optimal quantization design outperforms the sub-optimal design after $t=10$. In the $X$-axis, both quantization designs yield similar performance, because movement in the $X$-axis is relatively slow. In conclusion, while the change between optimal solutions is upper bounded by the same $\rho$ for both quantization designs, the optimal quantization design is able to consistently achieve better sub-optimality, which leads to better control performance.

%%%%%%%%%%%%%%%%%%%%%%%%%%%%%%%%%%%%%%%%%%%%%%%%%%%%%%%%%%%%%%%%%%%%%%%%%%%%%%%%
\section*{Appendix}

The coefficients $a_1, a_2, a_3$ and $b_1, b_2, b_3$ from Assumption \ref{ap5} are defined as follows:

\begin{align*}
    & a_{1}=(\kappa+1)(\kappa), a_{2}=(M \sqrt{m_{max}} \kappa(\kappa+1)\left(d L_{\max }+\sqrt{L}\right)\\
    & +M \sqrt{m_{max}} L(\kappa+\gamma-1)(1-\gamma))(L \kappa(\kappa+\gamma-1)(1-\gamma)),\\
    & a_{3}=(M \sqrt{d m_{max}}(\kappa+1))(L(\kappa+\gamma-1)(1-\gamma)), \\
   & b_{3} =(L_{\max } M \sqrt{d m_{max}} \kappa(\kappa+1)+L \sqrt{d m_{max}}(\kappa+\gamma-1)\\
   &(1-\gamma)) (L \kappa(\kappa+\gamma-1)(1-\gamma)), b_{1} =(L_{\max }(\kappa+1))(\kappa),  \\
   & b_{2}= (L_{\max } M \sqrt{m_{max}} \kappa(\kappa+1)\left(d L_{\max }+\sqrt{L}\right)+L_{\max } d \\
   &  \sqrt{m_{max}}  L(\kappa+1)(\kappa+\gamma-1)(1-\gamma))(L \kappa(\kappa+\gamma-1)(1-\gamma)).\\
\end{align*}

The Coriolis force matrix, damping matrix and gravity matrix defined in the nonlinear model \eqref{nonlinear model} are as follows:
\begin{align*}
C(\nu) &= 
\begin{bsmallmatrix}
0 & 0 & 0 & 0 & M_{z} v_z & -M_{y} v_y \\
0 & 0 & 0 & -M_{z} v_z & 0 & M_{x} v_x \\
0 & 0 & 0 & M_{y} v_y & -M_{x} v_x & 0 \\
0 & M_{z} v_z & -M_{y} v_y & 0 & M_{\omega_x} \omega_z & -M_{\omega_y} \omega_y \\
-M_{z} v_z & 0 & M_{x} v_x & -M_{\omega_x} \omega_z & 0 & M_{\omega_z} \omega_x \\
M_{y} v_y & -M_{x} v_x & 0 & M_{\omega_y} \omega_y & -M_{\omega_z} \omega_x & 0
\end{bsmallmatrix} , \allowdisplaybreaks\\ 
 D(\nu) &=  
\operatorname{diag} ([X_{v_x|v_x|}|v_x|,Y_{v_y|v_y|}|v_y|,Z_{v_z|v_z|}|v_z|, \nonumber\\
& \quad\quad\quad\quad K_{\omega_x|\omega_x|}|\omega_x|,M_{\omega_y|\omega_y|}|\omega_y|,N_{\omega_z|\omega_z|}|\omega_z|]) ,\allowdisplaybreaks\\
 g(\eta) &= 
\begin{bmatrix}
(W-B) q_y \\
-(W-B) \cos q_y  \sin q_z  \\
-(W-B) \cos q_y  \cos q_z  \\
\mathbf{0}_{3 \times 1}
\end{bmatrix},
\end{align*}
where $\mathbf{0}_{3 \times 1}$ is zero column matrix. $W$ and $B$ represent the weight and buoyant force acting on the AUV while assuming center of gravity collocates with the center of buoyancy.

%%%%%%%%%%%%%%%%%%%%%%%%%%%%%%%%%%%%%%%%%%%%%%%%%%%%%%%%%%%%%%%%%%%%%%%%%%%%%%%%
\bibliographystyle{IEEEtran}
\bibliography{IEEEabrv,MC}

\end{document}